\documentclass[twoside,12pt]{article}
\usepackage{amsmath,amssymb,amsthm,amsfonts,mathrsfs,wasysym,latexsym,times,lineno, subfigure,color,booktabs}
\usepackage{amsmath,amsfonts}
\usepackage{multirow}
\usepackage{array}
\usepackage{amsthm}
\usepackage{graphicx}
\usepackage{color}
\usepackage{multicol}
\usepackage{float}
\usepackage{cite}
\usepackage[title]{appendix}

\newtheorem{thm}{Theorem}[section]

\newtheorem{definition}{Definition}[section]

\newtheorem{lemma}[thm]{Lemma}

\newtheorem{theorem}[thm]{Theorem}

\newtheorem{corollary}[thm]{Corollary}

  \date{}
\title{Self-referential instances of the dominating set problem are irreducible}
\author{ Guangyan Zhou\\
\footnotesize Department of Mathematics and Statistics, Beijing Technology and Business University, Beijing, 100048, China\\
\footnotesize zhouguangyan@btbu.edu.cn
}
\begin{document}
\maketitle

\begin{abstract}

We study the algorithmic decidability of the domination number in the Erdős–Rényi random graph model $G(n,p)$. We show that for a carefully chosen edge probability $p=p(n)$, the domination problem exhibits a strong irreducible property. Specifically, for any constant $0<c<1$, no algorithm that inspects only an induced subgraph of order at most $n^c$ can determine whether $G(n,p)$ contains a dominating set of size $k=\ln n$. We demonstrate that the existence of such a dominating set  can be flipped by a local symmetry mapping that alters only a constant number of edges, thereby producing indistinguishable random graph instances which requires exhaustive search. These results demonstrate that the extreme hardness of the dominating set problem in random graphs cannot be attributed to local structure, but instead arises from the self-referential nature and near-independence structure of the entire solution space.
\end{abstract}

\section{Introduction}
The domination set problem is a classic NP-hard problem in graph theory. Given an Erd\H{o}s-R\'{e}nyi random graph $G(n,p)$, that is, a graph on $n$ vertices in which each of the $\binom n2$ potential edges appears independently with probability $p$. A dominating set is a subset of vertices such that every vertex of $G(n,p)$ either belongs to the set or is adjacent to at least one vertex in the set. Despite extensive study, no algorithm substantially better than the trivial brute-force search, which checks all subsets of size $k$, is known in general. This difficulty  has motivated research into alternative approaches, including fixed-parameter tractable (FPT) algorithms and approximation schemes \cite{downey95,chen2006}.

From the perspective of parameterized complexity, however, dominating set is known to be W[2]-hard, which strongly suggests that no algorithm with running time $f(k)n^{O(1)}$ exists unless the $W$- hierarchy collapses. In parallel, significant progress has been made in understanding the domination number of the random graphs. In particular, Wieland and Godbole  \cite{wieland2001} showed that for certain ranges of $p$, mirroring classical concentration phenomena observed for other graph parameters such as the independence number and the chromatic number. In contrast, when $p$ is small, for example $p=O(1/n)$, the domination number exhibits markedly different behavior: it is not concentrated on any interval of length $o(\sqrt n)$, and the threshold for domination shifts from the first-moment bound to the median, with a substantial gap between the two \cite{Glebov}. 

In this paper, we establish a strong indistinguishability phenomenon for the domination set problem in random graphs. Specifically, for any constant $0<c<1$, we show that it is impossible to determine whether $G(n,p)$ contains a dominating set of size $k=\ln n$  by inspecting any induced subgraph $ H$ of order at most $n^c$. To prove this result, we carefully choose the edge probability $p$ and show that there exist two types of instances of $G(n,p)$: one class consists of graphs with a unique dominating set of size $k$, while the other consists of graphs that contain no dominating set of size $k$, but instead contain vertex sets of  size $k$ that dominate all but a single vertex. 

A crucial feature of our construction is that the existence or non-existence of a dominating set of size $k$ can be toggled by altering the adjacency between four vertices. This change alters the existence of a dominating set of size $k$ while leaving the subgraph $H$ unchanged. Consequently, any algorithm whose decision is based solely on the information contained in $H$ cannot reliably distinguish between these two types of instances, and therefore cannot determine the existence of a dominating set of size $k$ without effectively examining the entire graph.

Our methodology is closely related to the symmetry-mapping technique introduced in the
construction of self-referential (most indistinguishable) instances for Model RB in \cite{xu2025}.  These instances were shown to require essentially brute-force computation for their solution. The present work extends this line of reasoning to the Dominating Set problem, providing further evidence that partial information cannot substitute for global structure. In particular, our results support the view that the fundamental source of extreme hardness in such problems lies in the self-referential nature and near-independence structure of their candidate solutions.

\section{Main results}

\begin{definition}
Let $G=G(n,p)$ be a random graph. We say that the \emph{dominating set problem} for $G$ is \emph{reducible} if there exists an induced subgraph $H$ of order $n^c$, for some constant $0<c<1$, such that $H$ contains a dominating set of size $k$ if and only if $G$ contains a dominating set of size $k$; otherwise $G$ is  said to be \emph{irreducible}. 
\end{definition}

The following theorem shows that, under this definition, the dominating set problem in random graphs cannot be reduced to any subgraph of order $n^c$ for any constant $0<c<1$.

\begin{theorem}\label{th:main}
The dominating set problem for $G(n,p)$ is irreducible.
\end{theorem}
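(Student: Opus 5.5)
We follow the symmetry-mapping strategy described in the introduction. We fix $k=\ln n$ and choose $p=p(n)$ so that the expected number of $k$-sets dominating all of $G(n,p)$ is of constant order. Concretely, we take $\binom nk(1-(1-p)^k)^{n-k}=\Theta(1)$. At this scale, and with positive probability, $G$ has either exactly one dominating set of size $k$ or none; in the latter case there are $k$-sets that dominate all but exactly one vertex (\emph{near-dominating sets}). The first step is a first- and second-moment computation. We show that, with probability bounded away from zero, the following hold.
\begin{itemize}
\item[(a)] There is a unique dominating $k$-set $S$. Moreover, some vertex $v\notin S$ is dominated by exactly one vertex $u\in S$.
\item[(b)] Alternatively, $G$ has no dominating $k$-set, but has a near-dominating $k$-set $S'$ missing a single vertex $v$.
\end{itemize}
The second moment requires controlling pairs of $k$-sets with overlap $i$. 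Since $k=\ln n$ and $(1-p)^k$ is tuned so that each vertex fails to be dominated with probability about $\ln n/n$, pairs with overlap $i\ge1$ contribute only a $1+o(1)$ factor. This is the "near-independence" of the solution space.

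The second step constructs the symmetry mapping $\phi$, which alters the edges among four vertices $u,v,w,x$. Starting from an instance of type (a), we delete the edge $uv$ and add an edge so that the other $k$-sets that almost dominate gain nothing. Deleting $uv$ destroys $S$ as a dominating set. We must check that no new dominating $k$-set is created and that the result is of type (b). Conversely, adding an edge from some vertex of $S'$ to $v$ turns a type (b) instance into type (a). Since only $O(1)$ edges change, the probability measure is distorted by at most a factor $(p/(1-p))^{O(1)}$. Counting arguments in the style of \cite{xu2025} therefore show that $\phi$ maps a positive-probability set of type (a) instances onto a positive-probability set of type (b) instances. We need to verify that uniqueness and nonexistence are preserved. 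This holds because any other $k$-set $T$ becoming dominating would require $T$ to miss only $v$ (or only vertices adjacent through the changed edges). By the second-moment estimate, the expected number of such $T$ that also satisfy this extra conditioned event is $o(1)$.

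The third step is the indistinguishability argument. Let $H$ be any induced subgraph on at most $n^c$ vertices. The four vertices $u,v,w,x$ used by $\phi$ can be chosen among $\Omega(n)$ candidates, since the witness vertex $v$ and its dominator can be picked in many ways. Hence at least one admissible choice has $\{u,v\}$ (and the other changed pair) not contained in $V(H)$. Because $H$ is induced, $\phi$ leaves $H$ unchanged, yet it flips whether $G$ has a dominating $k$-set. The property ``$H$ has a dominating $k$-set'' is the same for $G$ and $\phi(G)$, so it cannot be equivalent to the property for both. Therefore no $H$ of order $n^c$ witnesses reducibility, and the theorem follows.

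The main obstacle is step two: showing that the local edge flip changes the answer without creating or destroying other dominating $k$-sets. This needs a sharp conditional second-moment bound on $k$-sets that miss exactly one specified vertex. We will first attempt a direct union bound over overlap sizes $i$, using the tuning of $p$ so that the relevant terms are at most $\binom{k}{i}\binom{n-k}{k-i}/\binom nk\cdot n^{o(1)}\cdot(\ln n/n)=o(1)$. If this fails, we fall back on requiring $v$ to be chosen with small degree into near-solutions.
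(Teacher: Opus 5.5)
Your overall architecture matches the paper's. You use a second-moment argument to get, with probability bounded away from zero, either a unique dominating $k$-set or a near-dominating one; you then apply a four-vertex swap and leave $H$ untouched. The gap is in how you keep $H$ untouched.

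\textbf{The $\Omega(n)$ count is false.} From $\binom nk(1-(1-p)^k)^{n-k}=\Theta(1)$ you get $n(1-p)^k\approx\ln\binom nk\approx\ln^2 n$. So $(1-p)^k\approx\ln^2 n/n$, not $\ln n/n$. A vertex outside $S$ then has exactly one neighbour in $S$ with probability $kp(1-p)^{k-1}/(1-(1-p)^k)\approx(e-1)\ln^3 n/n$.
\begin{itemize}
\item In case (a) there are only about $(e-1)\ln^3 n$ admissible vertices $v$.
\item In case (b) the missed vertex is determined by $S'$, and there are typically only $O(\ln^2 n)$ near-dominating sets, since $\mathbf{E}[N]\approx\delta\ln^2 n$.
\end{itemize}
All admissible pairs $\{u,v\}$ therefore lie on polylogarithmically many vertices. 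An $H$ of order $n^c$ can contain all of them, so ``at least one admissible choice is not contained in $V(H)$'' does not follow.

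\textbf{This is not a mere miscount.} The definition of reducibility lets $H$ depend on $G$. Then no change of $O(1)$ edges, each with an endpoint outside $V(H)$, can destroy $S$. Destroying $S$ requires deleting every $S$-edge at some vertex $x\notin S$. Adding edges, or changing edges with both ends in $S$ or both ends outside $S$, never hurts domination by $S$. For each constant $J$, the vertices with at most $J$ neighbours in $S$ number only $O(\ln^{J+2}n)$. So $H$ can contain $S$ together with all of them.

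\textbf{How the paper handles this step.} It argues probabilistically, effectively with $H$ not depending on $G$. The set $S$ (in case (b), $S\cup\{v\}$) avoids $V(H)$ with probability $1-o(1)$. The expected number of vertices with a unique $S$-neighbour tends to infinity, so with high probability one lies outside $V(H)\cup S$. The paper then uses an explicit swap, which your proposal leaves unspecified: remove $uv$ and $zw$ and add $uz$ and $vw$, with $z\in S$ and $w\notin V(H)\cup S$. Every changed pair then has both endpoints outside $H$. The edge count, and hence the probability of the graph under $G(n,p)$, is unchanged.

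\textbf{What to fix.} Replace your counting step with this probabilistic argument, and state explicitly which order of quantifiers over $H$ and $G$ you are proving. Alternatively, give up the $O(1)$ bound and allow up to $\ln n$ edge changes. For example, in direction (a)$\to$(b), delete all $S$-edges at some vertex outside $V(H)\cup S$. This works for every $H$, and since it is a pure deletion it cannot create a new dominating set.

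\textbf{A separate point.} Your claim that $\phi$ maps a positive-probability set of type (a) graphs onto a positive-probability set of type (b) graphs is not needed. One partner instance per graph suffices for indistinguishability. Moreover, a per-graph likelihood ratio of $(p/(1-p))^{O(1)}$ would not establish it unless you also bound the number of preimages of each image graph.
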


\subsection{Existence of a unique dominating set of size $k$}
In this section we show that, with positive probability, the random graph $G(n,p)$ contains a unique dominating sets of size $k=\ln n$. Let $X$ be the number of dominating sets of size $k$ in $G(n,p)$. A straightforward calculation yields

\begin{align}
\label{eq:first}\mathbf{E}[X]&=\binom{n}{k}(1-(1-p)^k)^{n-k},\\
\mathbf{E}[X^2]&=\sum_{i=0}^k\binom{n}{i}\binom{n-i}{k-i}\binom{n-k}{k-i}(1-(1-p)^k)^{2(k-i)}(1-2(1-p)^k+(1-p)^{2k-i})^{n-2k+i}.
\end{align}

In the following, we tacitly choose the edge probability $p=p(n)$ so that $E[X]=\delta+o(1)$ for a constant $0<\delta<1$. Next, we try to get the asymptotic expression of $p$. Note that
$$\ln\binom nk+(n-k)\ln(1-(1-p)^k)=\ln\delta,$$
thus
$$(1-p)^k\approx1-\exp\left\{-\frac{\ln^2n}{n-k}\right\}\approx\frac{\ln^2n}n.$$
Thus we have
$$p=1-\frac1e\left[(1+o(1))\ln^2n\right]^{1/\ln n}.$$
Moreover, it is easy to see that the $\lim_{n\to\infty}p=1-1/e$.

\begin{lemma}\label{lem:onethird}
In  $G(n,p)$,
$$\frac{\delta}{\delta+1}\le \mathbf{Pr}(X>0)\le\delta.$$
\end{lemma}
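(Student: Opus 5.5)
The upper bound is Markov's inequality: $\mathbf{Pr}(X>0)\le \mathbf{E}[X]=\delta+o(1)$, which gives $\delta$ in the limit. The lower bound comes from the second moment method in the Cauchy--Schwarz form $\mathbf{Pr}(X>0)\ge \mathbf{E}[X]^2/\mathbf{E}[X^2]$. Since $\mathbf{E}[X]=\delta$, it suffices to show
$$\mathbf{E}[X^2]\le \delta+\delta^2+o(1)=\mathbf{E}[X]+\mathbf{E}[X]^2+o(1).$$
Then $\mathbf{Pr}(X>0)\ge \delta^2/(\delta+\delta^2)=\delta/(1+\delta)$, with the bounds holding asymptotically.

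Write $\mathbf{E}[X^2]=\sum_{i=0}^k T_i$, where $T_i$ is the term of the displayed formula for overlap $|S\cap S'|=i$. The term $i=k$ is exactly $\mathbf{E}[X]=\delta$. For the other terms I normalize by $\mathbf{E}[X]^2$:
$$\frac{T_i}{\mathbf{E}[X]^2}=\frac{\binom{k}{i}\binom{n-k}{k-i}}{\binom nk}\cdot (1-q)^{-2i}\left(\frac{1-2q+q^2(1-p)^{-i}}{(1-q)^2}\right)^{n-2k+i}, \qquad q=(1-p)^k\approx \frac{\ln^2 n}{n}.$$
The correction factor is $1+\frac{q^2((1-p)^{-i}-1)}{(1-q)^2}$. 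Raising it to the power $\approx n$ gives roughly $\exp\{nq^2(1-p)^{-i}\}$, and $nq^2=\ln^4 n/n$.

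The range splits into three parts.

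\emph{Case $i=0$.} Here $(1-p)^{-i}-1=0$, and $\binom{n-k}{k}/\binom nk=1-O(k^2/n)$, so $T_0/\mathbf{E}[X]^2=1+o(1)$. This produces the $\delta^2$ term.

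\emph{Small $i$, say $1\le i\le k/2$.} The hypergeometric weight is about $(k^2/n)^i/i!$, and $(1-p)^{-i}\le e^{i(1+o(1))}$. So the exponent $nq^2(1-p)^{-i}\le \ln^4 n\, e^{i}/n$ is $o(1)$ uniformly while $e^i\le n^{1/2}$. Each term is therefore $O((\ln^2 n/n)^i)$, and the sum over this range is $o(1)$.

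\emph{Large $i$, close to $k=\ln n$.} This is where I expect the main obstacle, because $(1-p)^{-i}$ can be as large as $(1-p)^{-k}\approx n/\ln^2 n$. Writing $j=k-i$, the bracket becomes $1-2q+q(1-p)^{j}$, and the natural comparison is with $T_k$ rather than with $\mathbf{E}[X]^2$. The ratio is then
$$\frac{T_{k-j}}{T_k}\approx \binom kj\binom{n-k}{j}\,\exp\{-n q(1-(1-p)^j)\}\le (kn)^j \exp\{-\ln^2 n\,(1-e^{-j(1+o(1))})\}.$$
For $j\ge1$ we have $1-(1-p)^j\ge p\approx 1-1/e$, so the exponential is $n^{-(1-1/e)\ln n}$. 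This beats $(kn)^j$ whenever $j\le c\ln n$ for a suitable constant $c$, so this range contributes $o(1)$. The intermediate values of $i$, where both estimates are borderline, would be handled by checking that the same exponent bound $nq^2(1-p)^{-i}$ is dominated by the $\binom nk^{-1}$-type savings from the hypergeometric factor. I would verify this last matching step carefully, since the parameter $p\to 1-1/e$ makes the two regimes meet near $i\approx k/2$.

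Combining the three ranges gives $\mathbf{E}[X^2]=\delta^2+\delta+o(1)$, which yields the lemma.
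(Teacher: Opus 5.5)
Your proof is correct and follows the paper's route: Markov for the upper bound, and the Cauchy--Schwarz second moment, with $i=0$ contributing $(1+o(1))\mathbf{E}[X]^2$, $i=k$ contributing $\mathbf{E}[X]$, and all other overlaps contributing $o(1)$. The paper treats $1\le i\le k-1$ in one stroke via the bound $(1+o(1))\frac{k^{2i}}{n^i}\exp\{(\ln^2 n)^{2-i/k}/n^{1-i/k}\}$, which amounts to your two estimates combined.

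The ``intermediate'' range you left to check is in fact empty. Your large-overlap bound works for every $j\le c\ln n$ with $c<1-1/e\approx 0.63$, so any $c\in(1/2,\,1-1/e)$ already overlaps your range $i\le k/2$. Also, the factor $(1-q)^{-2i}$ in your normalized ratio is spurious, because the exponent of $1-q$ is exactly $-2(n-2k+i)$; it is $1+o(1)$, so it does no harm.
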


\begin{proof}
The upper bound follows immediately from Markov’s inequality:
\begin{equation}
\mathbf{Pr}(X>0)\le \mathbf{E}[X]=\delta+o(1).
\end{equation}
For a lower bound we apply the second moment method.  $\mathbf{E}[X^2]$ counts ordered pairs of dominating sets of size $k$. Let $F(i)$ denote the contribution from pairs whose intersection has size $i$, then
\begin{align*}
\mathbf{E}[X^2]&=\sum_{i=0}^k F(i),
\end{align*}
where
 \begin{align*}
F(i)=\binom{n}{i}\binom{n-i}{k-i}\binom{n-k}{k-i}\big(1-(1-p)^k\big)^{2(k-i)}\left(1-2(1-p)^k+(1-p)^{2k-i}\right)^{n-2k+i}.
\end{align*}

 If $i=0$, we have
  \begin{align*}
F(0)=\binom{n}{k}\binom{n-k}{k}\big(1-(1-p)^k\big)^{2n-2k}=(1+o(1))\mathbf{E}[X]^2.
\end{align*}
If $i=k$, then 
  \begin{align*}
F(k)=\mathbf{E}[X].
\end{align*}
For $1\le i\le k-1$, 

  \begin{align*}
\frac{F(i)}{\mathbf{E}^2[X]}=\frac{\binom{n}{i}\binom{n-i}{k-i}\binom{n-k}{k-i}}{\binom nk ^2}\left[\frac{(1-2(1-p)^k+(1-p)^{2k-i})}{(1-(1-p)^k)^2}\right]^{n-2k+i}.
\end{align*}

Note that $k=\ln n$, we apply the following asymptotic estimates for $1\le i\le k-1$: $(1-p)^k=(1+o(1))\frac{\ln^2n}n$, and $(1-p)^{2k-i}=(1+o(1))\left(\frac{\ln^2n}n\right)^{1-\frac ik}$. This gives 

\begin{align*}
\frac{F(i)}{\mathbf{E}^2[X]}
\le (1+o(1))\frac{k^{2i}}{n^i}\exp\left\{\frac{(\ln^2n)^{2-\frac ik}}{n^{1-\frac ik}}\right\}.
\end{align*}

Thus, 
\begin{equation}\label{eq:cauchy}
\frac{\mathbf{E}[X^2]}{\mathbf{E}[X]^2}=\frac{\sum_{i=0}^kF(i)}{\mathbf{E}[X]^2}\le1+\frac1{\mathbf{E}[X]}+o(1)=1+\frac1{\delta}+o(1).
\end{equation}

Finally, by the Cauchy-Schwarz inequality,
$$\mathbf{Pr}(X>0)\ge\mathbf{E}[X]^2/\mathbf{E}{[X^2]}\ge\frac{\delta}{\delta+1}+o(1),$$
which completes the proof.
\end{proof}
As an immediate consequence of Lemma \ref{lem:onethird} we obtain a lower bound of the probability that $G(n,p)$ has a unique dominating set of size $\ln n$.
 \begin{corollary}\label{lem:solutionpair}
The probability that $G(n,p)$ has a unique dominating set of size $\ln n$ is at least $\delta(1-\delta)/(1+\delta)$.
\end{corollary}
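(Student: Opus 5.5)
We have $\mathbf{Pr}(X=1)=\mathbf{Pr}(X>0)-\mathbf{Pr}(X\ge 2)$. The lower bound on the first term comes from Lemma \ref{lem:onethird}, so the work is to bound $\mathbf{Pr}(X\ge 2)$ from above using the pair count already computed in its proof. The plan is to express everything through the first and second moments and then feed in the asymptotic estimates from that proof.

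\textbf{Step 1.} Use the pointwise inequality $\mathbf{1}_{X\ge 2}\le X(X-1)/2$, which holds for every nonnegative integer $X$. Taking expectations gives
$$\mathbf{Pr}(X\ge 2)\le \tfrac12\left(\mathbf{E}[X^2]-\mathbf{E}[X]\right).$$
Here $\mathbf{E}[X^2]-\mathbf{E}[X]=\sum_{i=0}^{k-1}F(i)$, which counts ordered pairs of distinct dominating sets.

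\textbf{Step 2.} Bound these pair terms with the estimates from the proof of Lemma \ref{lem:onethird}. We have $F(0)=(1+o(1))\mathbf{E}[X]^2$. For $1\le i\le k-1$, the terms satisfy $F(i)/\mathbf{E}[X]^2\le(1+o(1))\frac{k^{2i}}{n^{i}}\exp\{(\ln^2 n)^{2-i/k}/n^{1-i/k}\}$, and their sum is $o(1)$; this is the same fact used to obtain (\ref{eq:cauchy}). Hence
$$\sum_{i=0}^{k-1}F(i)=\delta^2+o(1),\qquad\text{so}\qquad \mathbf{Pr}(X\ge2)\le \tfrac{\delta^2}{2}+o(1).$$

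\textbf{Step 3.} Combine with Lemma \ref{lem:onethird}:
$$\mathbf{Pr}(X=1)\ge \frac{\delta}{1+\delta}-\frac{\delta^2}{2}-o(1).$$
It remains to compare this with the target
$$\frac{\delta(1-\delta)}{1+\delta}=\frac{\delta}{1+\delta}-\frac{\delta^2}{1+\delta}.$$
Since $\frac{\delta^2}{2}<\frac{\delta^2}{1+\delta}$ for $0<\delta<1$, the bound from Step 3 is strictly larger than the target by a constant margin. That margin absorbs the $o(1)$ error for large $n$, so the corollary follows with room to spare.

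\textbf{Main obstacle.} The delicate point is Step 2: we must justify that $\sum_{1\le i\le k-1}F(i)=o(\mathbf{E}[X]^2)$. For $i$ close to $k$, the factor $\exp\{(\ln^2 n)^{2-i/k}/n^{1-i/k}\}$ is no longer close to $1$. The paper's proof of Lemma \ref{lem:onethird} asserts this smallness, so we rely on it. If more care is needed, we split the range of $i$. For $i\le k/2$, the factor $k^{2i}n^{-i}$ dominates. For $i$ near $k$, write $j=k-i$ and bound $F(k-j)/\mathbf{E}[X]$ directly: it is at most roughly $(nk)^j$ times the probability that $j$ new vertices complete domination, and each such term is small. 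Even a cruder bound of the form $\mathbf{Pr}(X\ge2)\le \frac{\delta^2}{2}+\epsilon$ with small $\epsilon$ would still suffice, given the slack found in Step 3.
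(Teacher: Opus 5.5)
Your proof is correct, but it bounds $\mathbf{Pr}(X\ge 2)$ with a different inequality from the paper's. The paper uses only the first moment. From $\mathbf{E}[X]\ge\rho_1+2\rho_{\ge2}$ it gets $\rho_{\ge2}\le\mathbf{E}[X]-\mathbf{Pr}(X>0)\le\delta-\frac{\delta}{1+\delta}=\frac{\delta^2}{1+\delta}$. Subtracting this from $\mathbf{Pr}(X>0)\ge\frac{\delta}{1+\delta}$ gives exactly $\frac{\delta(1-\delta)}{1+\delta}$, with Lemma \ref{lem:onethird} used as a black box. You instead use the second factorial moment, $\mathbf{Pr}(X\ge2)\le\frac12\mathbf{E}[X(X-1)]$. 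This needs the information in (\ref{eq:cauchy}) itself, not just its consequence. Rearranged, (\ref{eq:cauchy}) says precisely $\mathbf{E}[X^2]-\mathbf{E}[X]\le(1+o(1))\mathbf{E}[X]^2=\delta^2+o(1)$. So your ``main obstacle'' is exactly the estimate the lemma's proof already asserts, and it does hold: the terms with $k-i\ge 4\ln\ln n$ sum to $O(k^2/n)$, while each of the $O(\ln\ln n)$ remaining terms is $\exp\{-(1-1/e-o(1))\ln^2 n\}$.

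Your route gains a sharper bound, $\rho_{\ge2}\le\frac{\delta^2}{2}+o(1)$ instead of $\frac{\delta^2}{1+\delta}$. The resulting constant margin $\frac{\delta^2(1-\delta)}{2(1+\delta)}$ genuinely absorbs the $o(1)$ errors. The paper's derivation silently treats $\mathbf{E}[X]=\delta+o(1)$ and the lemma's lower bound as exact, so it really only yields $\frac{\delta(1-\delta)}{1+\delta}-o(1)$. Once you have the factorial moment, you do not need Lemma \ref{lem:onethird} at all. The pointwise inequality $\mathbf{1}_{\{X=1\}}\ge X-X(X-1)$ gives $\mathbf{Pr}(X=1)\ge\delta-\delta^2-o(1)$, which is larger still.
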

\begin{proof}
Let $\rho_1$ be the probability that $G(n,p)$ has a unique dominating set of size $\ln n$, and $\rho_{\ge2}$ be the probability that $G(n,p)$ has at least two such dominating sets. Then from Lemma \ref{lem:onethird}, we have
\begin{align*}
&\ \mathbf{E}[X]=\delta\ge\rho_1+2\rho_{\ge2},\\
&\mathbf{Pr}(X>0)=\rho_1+\rho_{\ge2}\ge\frac{\delta}{\delta+1}.
\end{align*}
Therefore $\rho_1\ge\delta(1-\delta)/(1+\delta)$.
\end{proof}
\subsection{Non-existence of  dominating sets of size $k$}
\begin{lemma}\label{lem:noset}
In the random graph $G(n,p)$, if there does not exist a dominating set of size $k=\ln n$, then with high probability there exist $k$-vertex sets that dominates all but exactly one vertex.
\end{lemma}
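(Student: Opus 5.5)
Let $Y$ be the number of $k$-sets $S$ that dominate all but exactly one vertex of $G(n,p)$. I would show that $Y>0$ with high probability. Since $\mathbf{Pr}(X=0)\ge 1-\delta$ by Lemma \ref{lem:onethird}, conditioning on $\{X=0\}$ only inflates the failure probability by the constant factor $1/(1-\delta)$. So the conditional statement follows from $\mathbf{Pr}(Y=0)=o(1)$. The tool is the second moment method, mirroring the proof of Lemma \ref{lem:onethird}.

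\textbf{First moment.} Write $q=(1-p)^k=(1+o(1))\ln^2 n/n$. Then
\begin{equation*}
\mathbf{E}[Y]=\binom nk (n-k)\,q\,(1-q)^{n-k-1}=(1+o(1))\,\mathbf{E}[X]\,\frac{(n-k)q}{1-q}=(1+o(1))\,\delta\ln^2 n\to\infty .
\end{equation*}
Here the factor $(n-k)$ chooses the undominated vertex, and $q$ is the probability that it has no neighbour in $S$.

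\textbf{Second moment.} Again decompose $\mathbf{E}[Y^2]=\sum_{i=0}^k F'(i)$ by $i=|S\cap T|$. Each pair now also carries a choice of undominated vertices $u$ for $S$ and $v$ for $T$; the subcases $u=v$, $u\in T\setminus S$, $v\in S\setminus T$ and so on only change polynomially many factors of order $q$ or $1/(1-q)$.

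For $i=0$, I would check that $F'(0)=(1+o(1))\mathbf{E}[Y]^2$. The pairs with $u=v$ contribute relative weight about $1/((n-k)q)\cdot q^{-1}\cdot q^2/(\ldots)$, which should be $O(1/\ln^2 n)$; more carefully, I expect $O(\mathbf{E}[Y]/n)$-type terms, which are negligible.

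For $1\le i\le k-1$, the ratio $F'(i)/\mathbf{E}[Y]^2$ is bounded as in Lemma \ref{lem:onethird} by
\begin{equation*}
\mathrm{poly}(\ln n)\,\frac{k^{2i}}{n^i}\exp\Bigl\{(\ln^2n)^{2-i/k}n^{-(1-i/k)}\Bigr\},
\end{equation*}
where the extra polylogarithmic factor absorbs the correlation between the events concerning $u$ and $v$. The term $i=k$ gives at most $\mathbf{E}[Y]\cdot O(n)$ pairs with the same set; after dividing by $\mathbf{E}[Y]^2$ this is $O(1/\mathbf{E}[Y])$ times the ratio of choices of undominated vertex, which tends to $0$ because a fixed $S$ has at most one undominated vertex.

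Hence $\mathbf{E}[Y^2]/\mathbf{E}[Y]^2\to 1$, so $\mathbf{Pr}(Y>0)\to1$ by Cauchy--Schwarz, and therefore $\mathbf{Pr}(Y>0\mid X=0)\ge 1-o(1)/(1-\delta)\to1$.

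\textbf{Main obstacle.} The hard step is the middle range of $i$ close to $k$, say $i=k-j$ with $j$ small. There the exponential factor is roughly $\exp\{(\ln^2 n)^{1+j/k}\cdot n^{-j/k}\}$ with $n^{j/k}=e^{j}$, so it behaves like $\exp\{\Theta(\ln^2 n)\}$. This competes with $k^{2i}n^{-i}$, which is tiny only when $i\ge 1$ stays far from $k$.

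I would handle it by splitting at $i\le k-\ln\ln n\cdot C$. For $i$ below this cutoff the exponent is $o(1)$-dominated. For $i$ above it, I would bound $F'(i)$ directly by the expected number of pairs with large overlap, namely $\mathbf{E}[Y]\cdot k^{j}n^{j}\cdot(\text{probability the } j \text{ new vertices fix domination})$, using $1-q\approx 1$ per vertex. The aim is to show this is $o(\mathbf{E}[Y]^2)$ because $\mathbf{E}[Y]\to\infty$. This is exactly where Lemma \ref{lem:onethird}'s proof was sketchiest, so I would check this range most carefully.
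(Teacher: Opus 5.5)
Your plan is the paper's proof. It is a second-moment computation for the number of $k$-sets dominating all but one vertex, split by the overlap $i$ into three ranges: $i=0$, $i=k$ (which contributes exactly $\mathbf{E}[Y]$, since $I_S^2=I_S$), and $1\le i\le k-1$. It uses the same case analysis on where the two undominated vertices lie. Your explicit passage to the conditional statement via $\mathbf{Pr}(X=0)\ge 1-\delta-o(1)$ is a step the paper leaves implicit.

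Your ``main obstacle'' is a false alarm. Take $i=k-j$ with $j=O(\ln\ln n)$.
\begin{itemize}
\item The factor $k^{2i}n^{-i}$ is \emph{decreasing} in $i$, not small only far from $k$. At such $i$ it equals $\exp\{-(1-o(1))\ln^2 n\}$.
\item The exponential factor there is only $\exp\{(1+o(1))e^{-j}\ln^2 n\}$.
\item Hence the bound you already wrote is at most $\exp\{-(1-e^{-1}-o(1))\ln^2 n\}$ near $i=k$.
\end{itemize}
The logarithm of that bound is convex in $i$, so the whole middle sum is governed by the endpoint $i=1$ and is $O(\mathrm{polylog}(n)/n)$. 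No split is needed.

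Your fallback direct bound, as described with ``$1-q\approx 1$'', would actually fail, because $k^jn^j\gg\mathbf{E}[Y]$. The required saving of $\exp\{-\Theta(\ln^2 n)\}$ has to come from the $j$ new vertices being forced to dominate the $\Omega(\ln^2 n)$ vertices that were previously dominated only by the $j$ removed ones.
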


\begin{proof}
Let $S$ be a $k$-vertex set. We say $S$ dominates all but one vertex if exactly one vertex in $V\setminus S$ has no neighbor in $S$, while the remaining $n-k-1$ vertices in $V\setminus S$ are adjacent to  at least one vertex of $S$.

Let $N=\sum_{|S|=k}I_S$, where $I_S=\mathbf{1}_{\{\text{$S$ dominates all but one vertex}\}}$. A direct calculation gives
\begin{align}\label{eq:first2}
\mathbf{E}[N]=\binom{n}{k}(n-k)(1-p)^k\big(1-(1-p)^k\big)^{n-k-1}.
\end{align}

We compare $\mathbf{E}[N]$ with $\mathbf{E}[X]$. Since
\[
\frac{\mathbf{E}[N]}{\mathbf{E}[X]}
=(n-k)(1-p)^k\bigl(1-(1-p)^k\bigr)^{-1},
\]
and under our choice
\[
(1-p)^k=(1+o(1))\frac{\ln^2 n}{n},
\]
then
\[
\mathbf{E}[N]=(1+o(1))(\ln n)^2\mathbf{E}[X]\to\infty.
\]

\medskip
We now apply the second moment method. Let $S,S'$ be two  $k$-vertex set, and  $i=|S\cap S'|$. Then
\[
\mathbf{E}[N^2]
=\sum_{|S|=k}\sum_{|S'|=k}\mathbf{E}[I_S I_{S'}]=\sum_{i=0}^k \Phi(i)W(i),
\]
where
\begin{equation*}
\Phi(i)=\binom{n}{i}\binom{n-i}{k-i}\binom{n-k}{k-i},
W(i)=\mathbf{E}\left[I_S I_{S'}\bigm| |S\cap S'|=i\right]
= P_1(i)+2P_2(i)+P_3(i)+P_4(i),
\end{equation*}
and (the derivation of $W(i)$ is in the Appendix)
\begin{align*}
P_1(i)&=(k-i)^2\mu^2(1-\mu)^{2k-2i-2}\zeta^m,\\
P_2(i)&=(k-i)m\mu^2\vartheta(1-\mu)^{2k-2i-1}\zeta^{m-1},\\
P_3(i)&=m\varpi(1-\mu)^{2(k-i)}\zeta^{m-1},\\
P_4(i)&=m(m-1)(\mu \vartheta)^2(1-\mu)^{2k-2i}\zeta^{m-2},
\end{align*}
with
$\mu\triangleq(1-p)^k,\vartheta\triangleq1-(1-p)^{k-i},\varpi\triangleq(1-p)^{2k-i},m\triangleq n-2k+i,\zeta\triangleq1-2\mu+\varpi.$

Now we can rewrite (\ref{eq:first2}) as
\[ 
\mathbf{E}[N]=(n-k)\binom nk\mu(1-\mu)^{n-k-1}.
\]

We estimate $\mathbf{E}[N^2]/\mathbf{E}[N]^2$ by considering three cases.

\medskip
\noindent\textbf{Case 1}: $i=0$.
Here
\begin{align*}
\Phi(0)&=\binom{n}{k}\binom{n-k}{k},\\
W(0)&=\big(k^2+2k(n-2k)(1-\mu)^{-1}+(n-2k)+(n-2k)(n-2k-1)\big)\mu^2(1-\mu)^{2n-2k-2}.
\end{align*}
A direct computation yields
\begin{align*}
\frac{W(0)}{(n-k)^2\mu^{2}(1-\mu)^{2n-2k-2}}&=\frac{k^2+2k(n-2k)(1-\mu)^{-1}+(n-2k)+(n-2k)(n-2k-1)}{(n-k)^2}\\
&=1+o(1),
\end{align*}
hence
\[
\frac{\Phi(0)W(0)}{\mathbf{E}[N]^2}=1+o(1).
\]

\medskip
\noindent\textbf{Case 2}: $i=k$.
In this case $S=S'$, so 
\begin{align*}
\Phi(k)=\binom{n}{k},\quad
W(k)=(n-k)\mu(1-\mu)^{n-k-1},
\end{align*}
thus $\Phi(k)W(k)=\mathbf{E}[N]$.
Therefore,
\[
\frac{\Phi(k)W(k)}{\mathbf{E}[N]^2}
=\frac{1}{\mathbf{E}[N]}
=o(1).
\]

\medskip
\noindent\textbf{Case 3}: $1\le i\le k-1$.
It is easy to see that
\begin{align}\label{eq:Phi}
\frac{\Phi(i)}{\binom{n}{k}^2}\le\frac{k^{2i}}{n^i}.
\end{align}
Moreover, note that $k=\ln n$, standard asymptotic estimates give
\[
\mu=(1+o(1))\frac{\ln^2n}n,\qquad
\vartheta=1-(1-p)^{k-i}=1-o(1),
\]
and
\[
\zeta^m=(1-2(1-p)^k+(1-p)^{2k-i})^m\le(1-\mu)^{2m}\exp\{(1+o(1))\ln^2n\}.
\]

\medskip

Therefore,
\begin{align*}
\frac{P_1(i)}{(n-k)^2\mu^2(1-\mu)^{2n-2k-2}}&=\frac{(k-i)^2}{(n-k)^2}(1-\mu)^{-2m}\zeta^{m}=o(1).\\
\frac{P_2(i)}{(n-k)^2\mu^2(1-\mu)^{2n-2k-2}}&=\frac{2(k-i)m}{(n-k)^2}\vartheta(1-\mu)^{-2m-1}\zeta^{m-1}=o(1).\\
\frac{P_3(i)}{(n-k)^2\mu^2(1-\mu)^{2n-2k-2}}&=\frac{m}{(n-k)^2}(1-p)^{-i}(1-\mu)^{-2m}\zeta^{m-1}=o(1).\\
\frac{P_4(i)}{(n-k)^2\mu^2(1-\mu)^{2n-2k-2}}&=\frac{m(m-1)}{(n-k)^2}\vartheta^2(1-\mu)^{-2m}\zeta^{m-2}.
\end{align*}
Note that $\frac{m(m-1)}{(n-k)^2}=1+o(1)$, $\vartheta^2\le1$, and 
$$(1-\mu)^{-2m}\zeta^{m-2}=\exp\left\{(1+o(1))\left(\frac{\ln^2n}{n}\right)^{1-\frac ik}\ln^2n\right\}.$$

By (\ref{eq:Phi}), we have
\begin{align*}
\frac{\Phi(i)}{\binom nk^2}\frac{P_4(i)}{(n-k)^2\mu^2(1-\mu)^{2n-2k-2}}=(1+o(1))k^{2i}\exp\left\{\left(\frac{\ln^2n}{n}\right)^{1-\frac ik}\ln^2n-i\ln n\right\},
\end{align*}
and attains its maximum at around $i=k-1$, thus
\begin{align*}
\sum_{i=1}^{k-1}\frac{\Phi(i)W(i)}{\mathbf{E}^2[N]}=o(1).
\end{align*}

Therefore,
\begin{align*}
\frac{\mathbf{E}[N^2]}{\mathbf{E}^2[N]}\le1+o(1).
\end{align*}
The second moment method yields 
\begin{align*}
\mathbf{P}[N>0]\ge\frac{\mathbf{E}^2[N]}{\mathbf{E}[N^2]}\ge1-o(1).
\end{align*}
Thus with high probability there exists a  $k$-vertex set that dominates all but exactly one vertex, completing the proof. 
\end{proof}

\section{Proof of Theorem  \ref{th:main}}

Let $\mathscr{G}$ denote the family of instances of $G(n,p)$ such that each instance either has a unique dominating set of size $k=\ln n$ or has no dominating set of that size.  By Lemma~\ref{lem:onethird} and Lemma~\ref{lem:noset}, the probability
that a random graph $G(n,p)$ belongs to $\mathscr{G}$ is bounded away from zero.

We show that for any constant $0<c<1$, no subgraph of order at most $n^c$ suffices
to decide whether $G$ contains a dominating set of size $k$.

\medskip
\noindent\textbf{Case 1: $G$ has a unique dominating set of size $k$.}

Assume that $G\in\mathscr{G}$ contains a unique dominating set $S$ of size $k$.
By Corollary~\ref{lem:solutionpair}, this occurs with probability at least $\delta(1-\delta)/(1+\delta)$.
Let $H$ be an arbitrary induced subgraph of $G$ on at most $n^c$ vertices, and let $V_G,V_H$ be the vertex set of $G,H$ respectively.

Then
\[
\mathbf{P}(S\subseteq V_G\setminus V_H)
=\frac{\binom{n-n^c}{k}}{\binom{n}{k}}
=1-o(1),
\]
so with high probability the vertices of $S$ lie entirely outside $H$.

We first show that with high probability there exists a vertex in
$V_G\setminus(V_H\cup S)$ adjacent to exactly one vertex of $S$.
For a vertex $v\notin S$, let
\[
A_v=\{\text{$v$ is adjacent to exactly one vertex of $S$}\},
\quad
B_v=\{\text{$v$ is adjacent to at least one vertex of $S$}\}.
\]
Then
\[
\mathbf{P}(A_v\mid B_v)
=\frac{\mathbf{P}(A_v)}{\mathbf{P}(B_v)}
=\frac{k p(1-p)^{k-1}}{1-(1-p)^k}.
\]
Consequently,
\[
\mathbf{P}\!\left(\exists\,v\in V_G\setminus(V_H\cup S): A_v\right)
=1-\left(1-\frac{k p(1-p)^{k-1}}{1-(1-p)^k}\right)^{n-n^c-k}
=1-o(1).
\]

Fix such a vertex $v$, and let $u\in S$ be its unique neighbor in $S$.
Next, choose  vertices $z\in S$ and
$w\in V_G\setminus(V_H\cup S)$  such that, among these vertices, the only existing edges are $(u,v)$ and $(z,w)$.

We now perform a \emph{symmetry mapping} by removing the edges $(u,v)$ and $(z,w)$, and adding the edges $(u,z)$ and
$(v,w)$, as illustrated in Figure \ref{map} (transition from (a) to (b)). This operation modifies only edges incident to vertices outside $H$, while preserving the degree of every vertex and keeping the total number of edges unchanged. As a result of this transformation, vertex $v$ no longer has any neighbor in $S$, implying that $S$ is no longer a dominating set.

Moreover, with high probability no new dominating set of size $k$ is created.
Indeed, the probability that  $\{v\}$ or $\{w\}$ extends to a
dominating set of size $k$ is at most
\[
\binom{n-n^c}{k-1}\bigl(1-(1-p)^k\bigr)^{n-n^c-k-1}
=o(1).
\]

Thus, with high probability, the modified graph contains no dominating set of
size $k$.

\medskip
\begin{figure}[h]
  \centering
  \includegraphics[width=0.8\columnwidth]{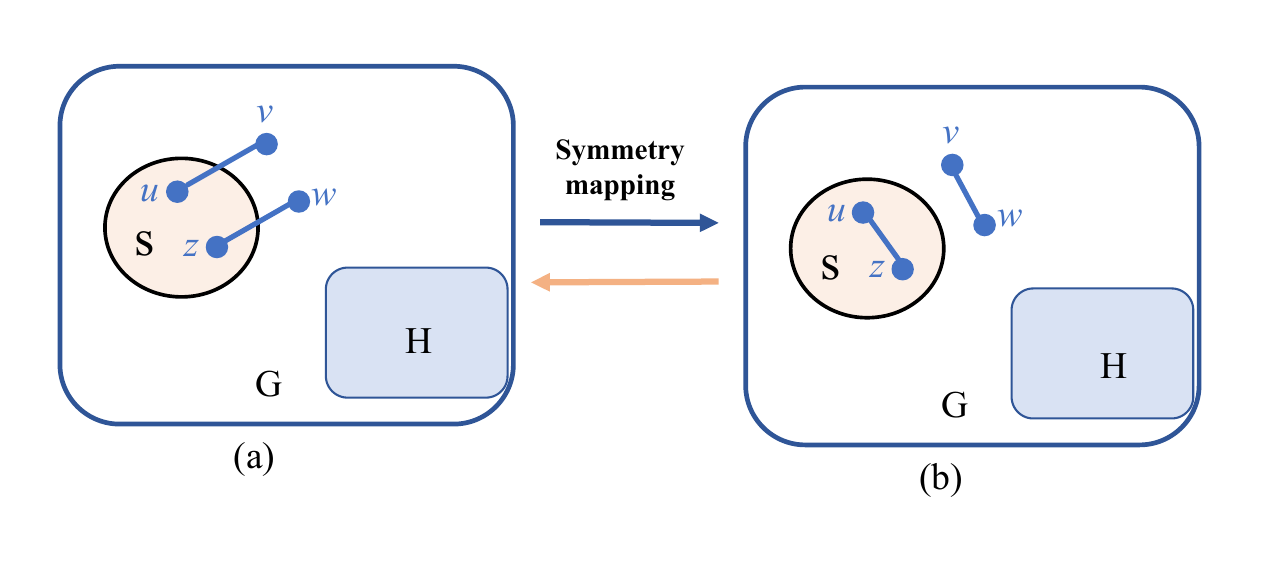}
\caption{\footnotesize{
A symmetry mapping between two classes of instances. \\
  (a) → (b) Initially, $G$ has a unique dominating set of size $k$, and the vertex $v$ has exactly one neighbor $u\in S$. The symmetry mapping transforms $G$  into a graph with no dominating set of size $k$.\\
(b) → (a) Initially, $G$ has no dominating set of size $k$, and $v$ is the only vertex not dominated by $S$. The symmetry mapping transforms $G$  into a graph with a unique dominating set of size $k$.
}\label{map}}
\end{figure}

\noindent\textbf{Case 2: $G$ has no dominating set of size $k$.}

Now suppose that $G\in\mathscr{G}$ has no dominating set of size $k$.
By Lemma~\ref{lem:noset}, with high probability there exists a set
$S$ of $k$ vertices and a vertex $v$ such that $S$ dominates all vertices except
$v$.

Let $H$ be any induced subgraph of order at most $n^c$. Then
\[
\mathbf{P}(S\cup\{v\}\subseteq G\setminus H)
=\frac{\binom{n-n^c}{k+1}}{\binom{n}{k+1}}
=1-o(1),
\]
so with high probability all vertices in $S\cup\{v\}$ lie outside $H$.

Choose vertices $u,z\in S$ and $w\in V_G\setminus(V_H\cup S)$ such that, among these vertices, the only existing edges are $(v,w)$ and $(u,z)$. Apply the symmetry mapping in the opposite direction by
removing the edges $(v,w)$ and $(u,z)$ and adding the edges $(v,u)$ and $(z,w)$. After this change, the vertex $v$ becomes
adjacent to $S$, and hence $S$ forms a dominating set of size $k$. See Figure \ref{map}(from (b) to (a)).

As before, with high probability no other dominating set of size $k$ is created.
Therefore, with high probability, the modified graph contains a unique dominating
set of size $k$.

\medskip

In both cases, by altering only edges whose endpoints lie outside $H$, we can
flip the existence of a dominating set of size $k$ while keeping the induced
subgraph $H$ unchanged. Consequently, for any $0<c<1$, no subgraph of order
$n^c$ contains sufficient information to determine whether $G(n,p)$ has a
dominating set of size $k$.

This proves that the dominating set problem for $G(n,p)$ is irreducible, and
completes the proof of Theorem~\ref{th:main}.

\section{Conclusion}
Our results provide another example of self-referential instances that inherently require exhaustive search, in the sense that no proper substructure contains sufficient information to decide the global property. In such instances, the existence of solutions cannot be inferred from any sublinear portion of the input, reflecting a fundamental obstruction to local or reduction-based algorithms. We argue that this phenomenon is closely related to the presence of independent solution spaces in the underlying combinatorial problem, where candidate solutions interact only weakly and local modifications can drastically alter global feasibility.

This mechanism naturally arises in certain constraint satisfaction problems. In the dominating set problem studied here, for example, two randomly chosen candidate vertex sets of size $n^c$ are disjoint with high probability, implying that their domination properties are nearly independent. This independence enables the construction of symmetry mappings that preserve all local subgraphs while flipping the existence of global solutions, thereby producing  most indistinguishable instances. Similar phenomena have been observed in other classical problems. Notably, for the CLIQUE problem, Li et al.~\cite{Li2025} constructed self-referential instances that expose an intrinsic source of algorithmic hardness beyond local graph structure.

Taken together, these results suggest that by understanding how self-reference and solution-space independence forces exhaustive search in specific cases, one can gain insight into the nature of extreme hardness, making ``proving computational hardness not hard in mathematics" for these particular examples. This aligns with Gödel's use of self-reference in logic to prove the existence of formally unprovable mathematical statements.

\begin{appendices}
\section{Derivation of the function $W(i)$}
 Let $S,S'$ be two sets, each of size $k$, that each dominate all but one vertex in $G(n,p)$. Suppose that $|S\cap S'|=i$. We derive an explicit expression for $$W(i)=\mathbf{P}\big(I_S=1,I_{S'}=1\bigm| |S\cap S'|=i\big),$$ where $I_S=1$ (respectively $I_{S'}=1)$ denotes the event that $S$ (resp. $S'$) dominates all but one vertex.

Define the disjoint vertex sets 
\begin{align*}
A=S\cap S',B=S\backslash S',C=S'\backslash S,D=V\backslash(S\cup S').
\end{align*}
Then
\begin{align*}
|A|=i,|B|=|C|=k-i,|D|= n-2k+i.
\end{align*}
The events $I_S=1$ and  $I_{S'}=1$ can equivalently restated as follows:
\begin{itemize}
\item $I_S=1$: Among the vertices in $C\cup D$, exactly one vertex has no neighbor  in $S$;
\item $I_{S'}=1$: Among the vertices in $B\cup D$, exactly one has no neighbor in $S'$.
\end{itemize}

Let $x\in C\cup D$ be  the unique vertex not dominated by $S$, and $y\in B\cup D$ be the unique vertex not dominated by $S'$. We consider all possible locations of the pair $(x,y)$.

For convenience, introduce the notation
\begin{align*}
\mu\triangleq(1-p)^k,\vartheta\triangleq1-(1-p)^{k-i},\varpi\triangleq(1-p)^{2k-i},m\triangleq n-2k+i,\zeta\triangleq1-2\mu+\varpi.
\end{align*}

\textbf{Case 1: $x\in C$ and $y\in B$}

The vertex $x$ has no neighbor in $S$ with probability $\mu$, and $y$ has no neighbor in $S'$ with probability $\mu$. Each of the remaining $k-i-1$ vertices in $C$ must be dominated by $S$, and each of the remaining $k-i-1$ vertices in $B$ must be dominated by $S'$, each  occurring with probability $1-\mu$. Each vertex in $D$ must be dominated by both $S$ and $S'$, which happens with probability $\zeta$. Since there are $k-i$ choices for $x$in $C$ and  $k-i$ choices for $y$ in $B$, we obtain 

$$P_1(i)=(k-i)^2\mu^2(1-\mu)^{2k-2i-2}\zeta^m.$$

\textbf{Case 2: $x\in C,y\in D$} (the symmetric case $x\in D,y\in B$ yields the same contribution).

Here $x$ has no neighbor in $S$, which occurs with probability $\mu$. The vertex $y$ has no neighbor in $S'$ but is dominated by $S$, so it has no edges to $A\cup C$ and at least one edge to $B$; this occurs with probability $\mu \vartheta$.  The remaining $k-i-1$ vertices in $C$ are dominated by $S$, and all $k-i$ vertices in $B$ are dominated by $S'$, each with probability $1-\mu$. The remaining $m-1$ vertices in $D$ are dominated by both $S$ and $S'$, each with probability $\zeta$. There are $k-i$ choices for $x$ and $m$ choices for $y$, giving

$$P_2(i)=(k-i)m\mu^2\vartheta(1-\mu)^{2k-2i-1}\zeta^{m-1}.$$

\textbf{Case 3: $x\in D,y\in D$, and $x=y$}

In this case the common vertex $x=y$ has no neighbors in either $S$ or $S'$, which occurs with probability $\varpi$. The remaining $m-1$ vertices in $D$ are dominated by both $S$ and $S'$, each with probability $q$. All $k-i$ vertices in $C$ are dominated by $S$, and  all $k-i$ vertices in $B$ are dominated by $S'$,  each with probability $1-\mu$. Since there are $m$ choices for the common vertex, we obtain

$$P_3(i)=m\varpi(1-\mu)^{2(k-i)}\zeta^{m-1}.$$

\textbf{Case 4: $x\in D,y\in D$ and $x\ne y$}

Here $x$ has no neighbor in $S$ but is dominated by $S'$, which occurs with probability $\mu \vartheta$; symmetrically, $y$ has no neighbor in $S'$ but is dominated by $S$, also with probability $\mu \vartheta$. The remaining $m-2$ vertices in $D$ are dominated by both $S$ and $S'$, each with probability $\zeta$. All vertices in $C$ are dominated by $S$, and all vertices in $B$ are dominated by $S'$, each with probability $1-\mu$. There are $m(m-1)$ ordered choices for $(x,y)$, yielding

$$P_4(i)=m(m-1)(\mu \vartheta)^2(1-\mu)^{2k-2i}\zeta^{m-2}.$$

Combining the above cases, we have
$$W(i)=P_1(i)+2P_2(i)+P_3(i)+P_4(i).$$

 \end{appendices}
 
\newpage


\end{document}